\newcommand{\commentSabine}[1]{}
\newcommand{\WA}{\mathrm{WA}}
\newcommand{\Min}{\mathrm{Min}}
\newcommand{\Max}{\mathrm{Max}}
\newcommand{\Prop}{\mathtt{Prop}}
\newcommand{\Sources}{\mathcal{S}}
\newcommand{\AggStrat}{\mathtt{Agg}}
\newcommand{\AggS}{\AggStrat_\Sources}
\newcommand{\leqt}{\leq_t}
\newcommand{\leqk}{\leq_k}
\newcommand{\FDE}{\textsf{BD\ }}
\newcommand{\LFDE}{\mathcal{L}_\textsf{BD}}
\newcommand{\BD}{\textsf{BD}}
\newcommand{\LBD}{\mathcal{L}_\textsf{BD}}
\newcommand{\f}{\varphi}
\newcommand{\p}{\psi}
\newcommand{\Luk}{{\mathchoice{\mbox{\rm\L}}{\mbox{\rm\L}}{\mbox{\rm\scriptsize\L}}{\mbox{\rm\tiny\L}}}}
\begin{document}
\title{Belief based on inconsistent information\thanks{The research of Marta B\'ilkov\'a was supported by the grant GA17-04630S of the Czech Science Foundation. The research of Sabine Frittella and Sajad Nazari was funded by the grant ANR JCJC 2019, project PRELAP (ANR-19-CE48-0006). The research of Ondrej Majer was supported by the grant GA16-15621S.}}
%
%
\author{Marta B\'ilkov\'a\inst{1}\orcidID{0000-0002-3490-2083} \and
Sabine Frittella\inst{2}\orcidID{0000-0003-4736-8614} \and
Ondrej Majer\inst{3}\orcidID{0000-0002-7243-1622}
\and
Sajad Nazari\inst{2}\orcidID{0000-0002-4295-2435}
}
\authorrunning{B\'ilkov\'a, Frittella, Majer, Nazari}
%
\institute{Czech Academy of Sciences, Institute of Computer Science, Prague\\
\email{bilkova@cs.cas.cz}
\and
INSA Centre Val de Loire, Univ.\ Orl\'{e}ans, LIFO EA 4022, France\\
\email{sabine.frittella@insa-cvl.fr,sajad.nazari@insa-cvl.fr}
\and
Czech Academy of Sciences, Institute of Philosophy, 
Prague\\
\email{majer@flu.cas.cz}
}
\maketitle              
\begin{abstract}
A recent line of research has developed around logics of belief based on evidence \cite{BBOS:WoLLIC2016,bilkovaetal2016}. The approach of \cite{bilkovaetal2016} understands belief as based on information confirmed by a reliable source.
We propose a finer analysis of how belief can be based on information, where the confirmation comes from multiple possibly conflicting sources and is of a probabilistic nature. 
We use Belnap-Dunn logic and its probabilistic extensions to account for potentially contradictory information on which belief is grounded. We combine it with an extension of \L ukasiewicz logic, or a bilattice logic, within a two-layer modal logical framework to account for belief. 
\keywords{
epistemic logics \and
non-standard probabilities \and
Belnap-Dunn logic \and
two-layer modal logic.
}
\end{abstract}
\section{Introduction}
To form beliefs about the world, we collect and process data of different origins to provide us with reliable information concerning particular issues. Information derived from data typically is of a probabilistic nature, and,
as obtained from multiple sources of different origins, it inevitably is incomplete and often conflicting concerning the issues we wish to resolve. 
In this context, we propose logics to formalize how an agent can build beliefs based on information (uncertain, incomplete, and sometimes inconsistent) provided from the available collected data, and how to reason with and about such beliefs.

\emph{Incompleteness} of information alone is ever-present when reasoning about data. 
Applications, such as relational databases, often use many-valued logics to properly account for indefiniteness. Namely, Kleene’s three-valued logic \cite{kleene52} became the design choice of SQL and similar systems (the use of Kleene’s logic in this context was first proposed by \cite{Codd75}, and argued optimal in \cite{CGLibkinKR18}.) In \cite{Belnap19}, Belnap introduced a four-valued logic with intended database applications (see e.g. \cite{GMO2015}), which extends Kleene's logic, but also allows to model reasoning with non-trivial \emph{inconsistencies}. Further developed by Dunn \cite{dunn76}, Belnap-Dunn four-valued logic $\BD$, also referred to as First Degree Entailment, became a prominent logical framework which encompasses reasoning with both incomplete and inconsistent information.
This logic evaluates formulas to Belnap-Dunn square -- a lattice built over an extended set of truth values $\{t,f,b,n\}$ (true, false, both, neither), where $b$ and $n$ correspond to inconsistent and incomplete information respectively (Figure \ref*{fig:square}, middle). One of the underlying ideas of this logic is that not only truth, but also amount of information that formulas carry (reflected by the four semantical values) matters. This idea was generalized by introducing the algebraic notion of bilattices by Ginsberg \cite{ginsberg88} in the context of AI, and studied further in \cite{R2010PhD,JR2012}. Bilattices contain two lattice orders simultaneously: a truth order, and a knowledge (or an information) order. Belnap-Dunn square, the smallest interlaced bilattice, can be seen as the product bilattice of the two-element lattice (Figure \ref*{fig:square}, left) where the truth-values are pairs of classical values which can be naturally interpreted as representing two independent dimensions of information – positive and negative one\footnote{This  independence assumption has in fact a support in scientific practice – if an experiment confirming a hypothesis fails, does not automatically mean it is rejected.}. We can understand them as providing positive and negative support for statements independently. It was used to provide the logic with the double-valuation frame semantics by Dunn \cite{dunn76}.

The problem of dealing with inconsistency concerns probabilistic information as well. There are essentially two ways out. One way is to get rid of inconsistencies, the other way is to develop systems with inference rules which can work with inconsistent premises. While on the logic side there are systems providing both kinds of solutions, for example belief revision or paraconsistent logics, the majority of solutions on the probability side go for the first solution – getting rid of inconsistency (cf. the Dempster–Shafer theory of belief functions \cite{dempster1968}) – and the attempts of the second kind emerged only relatively recently. Zhou \cite{Zhou13} extends the theory of belief functions to the setting of distributive lattices, in particular bilattices and de Morgan lattices, and provides a complete logic to reason about belief functions based on \BD. Michael Dunn \cite{dunn2010} defines a probablistic framework over four-valued logic and studies properties of the resulting probabilistic entailment. The idea of an independent account for positive and negative information, underlying the double-valuation semantics of \BD, naturally ge\-ne\-ra\-li\-zes to probabilistic extensions of Belnap-Dunn four-valued logic proposed in \cite{kleinetal2020}, which we will use in this paper. It generalizes Belnap–Dunn logic in a similar way as classical probability theory generalizes propositional logic, and is referred to as theory of non-standard probabilities. 
 
When it comes to management of \emph{uncertainty}, probability and other measures of uncertainty can be understood as graded notions, as one tries to quantify the plausibility of unverified events typically over the interval [0,1]. Graded notions are one of the subjects traditionally studied by methods of fuzzy logics. As probability is not truth-functional, it does not admit a straightforward treatment by logical methods. However, one may deal with probability as a modal operator in logical systems (cf. \cite{Hamblin59}), for example in the systems of modal fuzzy logic \cite{hajek1988}. There are two main approaches to probabilistic modalities over classical logic: two-layered and intensional. 
 The two-layered logical formalism introduced in \cite{faginHalpernMegido1990,Halpern05} separates the non-modal lower language of events from the modal upper language of probabilities. The system divides into three parts: lower level of classical propositional reasoning, reasoning about probabilities consisting of the axioms that characterize probability measures in finite spaces, and the upper level of reasoning about (Boolean combinations of) linear inequalities.
H\'ajek \cite{hajek1988} proposed to replace the quantitative reasoning in form of linear inequalities
with many-valued reasoning, namely \L ukasiewicz logic, in such formalism on the upper level to obtain a fuzzy probability logic for formal reasoning under uncertainty. The graded modality “probably”, which can be used to model belief of an agent understood as a kind of subjective probability, is interpreted as a finitely-additive probability on a Boolean algebra of events with values in the real unit interval. Consequently, a class of modal logics for dealing with virtually any uncertainty measure has been covered by the formalism in \cite{CintulaN14}. In this paper, we aim at extending the framework to encompass reasoning with inconsistent probability information. 

\medskip\par\noindent
We look at an agent who considers a set of issues, has access to (multiple) sources providing positive and negative information on the issues in form of non-standard probabilities, and builds beliefs based on information aggregated from these sources.
From plethora of possible scenarios we single out two case studies that we use to illustrate the different concepts at stake. 
\begin{example}[Aggregating heterogeneous data]
\label{case:cs}
A company launching a new car model needs to decide its selling price and its advertising strategy. Hence, its
data analysts must study the reports on the sells of the previous products launched by the company and the success or failure of the advertisement campaigns. 
This study relies on factual information such as ``during the year 2015, the company sold $n$ items of product $Y$", 
but also on statement based on statistical analyses such as ``the advertisement broadcasted in June 2016 increased the sells of product $Y$ among the 20-30 years old of 30\%''.
The second statement is based on aggregated information about the buyers that might be partial and partly false. Plus, the company has access to statistical studies about the population on increase or decrease in expenses for cars. 
\end{example}
\begin{example}
[How to lead an investigation]
\label{case:human}
An investigator needs to know if one of the suspects was present at the crime scene. She
collects information from various sources: CCTV camera recordings, ATM logs, witnesses' statements, etc. No information of this kind is absolutely precise, and typically different sources of information contradict each other. 
Sources provide information of a probabilistic nature: camera recordings are imprecise due to light conditions, witnesses are not absolutely sure what they have seen. 
Moreover, the pieces of evidence confirming investigator's hypothesis that the suspect was present at the place of crime (that is, the positive information) are different from, and somewhat independent of, those rejecting it (that is, the negative information): there is a CCTV camera closed to the crime scene vs. ATM in a supermarket in a different city. 
For example, a lack of evidence supporting the suspect was present at the crime scene does not yield a proof she was not there. 
In the end, the investigator has to aggregate the available information and form beliefs about what likely happened.
\end{example}
In many scenarios we can adapt \emph{aggregation strategies} that have been introduced on classical probabilities: a company that has access to a huge amount of heterogeneous data from various sources and 
uses software capable of analyzing these data.
In this case it makes sense to consider aggregation methods that require a substantial computational power.
A natural strategy here is to evaluate sources with respect to their reliability and aggregate them by taking their weighted average.
Another kind of agent is an investigator of a criminal case who builds her opinion on the guilt of a suspect based on different pieces of evidence. We assume that all the sources are equally reliable and the investigator is very cautious and does not want to draw conclusions hastily. Hence, she relies on statements as little as all her sources agree
on them, and the aggregation she uses returns the minimum of the positive and the minimum of the negative probabilities provided by the sources. If on the other hand the investigator considers all the sources being perfectly reliable, she accepts every piece of evidence and builds her belief using the aggregation maximazing both probabilities.

In what follows, we will propose two-layer modal logics of belief of a single agent, belief that is grounded on probabilistic information  provided (positive and negative information independently) by multiple sources. The underlying logic of facts or events is chosen to be \BD, the upper logic varies between \BD\ and logics derived from \L ukasiewicz logic and based on product or bilattice algebras, to systematically account for positive and negative information independently (and thus incompleteness and conflict) on both levels.
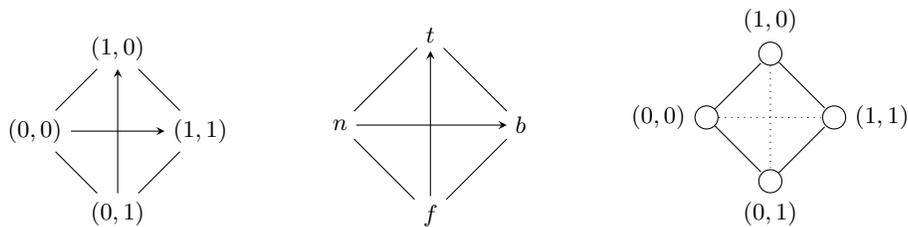
\begin{figure}
	\begin{tikzpicture}[>=stealth,relative]
	\node (U1) at (0,-1.1) {$(0,1)$};
	\node (U2) at (-1.1,0) {$(0,0)$};
	\node (U3) at (1.1,0) {$(1,1)$};
	\node (U4) at (0,1.1) {$(1,0) $};
	
	\path[-,draw] (U1) to (U2);
	\path[-,draw] (U1) to (U3);
	\path[-,draw] (U2) to (U4);
	\path[-,draw] (U3) to (U4);
	\path[->,draw] (U1) to (U4);
	\path[->,draw] (U2) to (U3);
	
	\end{tikzpicture}
	\hfill
	\begin{tikzpicture}[>=stealth,relative]

	\node (U1) at (0,-1.2) {$f$};
	\node (U2) at (-1.2,0) {$n$};
	\node (U3) at (1.2,0) {$b$};
	\node (U4) at (0,1.2) {$t$};
	
	\path[-,draw] (U1) to (U2);
	\path[-,draw] (U1) to (U3);
	\path[-,draw] (U2) to (U4);
	\path[-,draw] (U3) to (U4);
	\path[->,draw] (U1) to (U4);
	\path[->,draw] (U2) to (U3);
	\end{tikzpicture}
	\hfill
	\begin{tikzpicture}[-,>=stealth,shorten >=0.5pt,auto,node distance=1.2cm,thin,
	main node/.style={circle,draw,font=\sffamily\normalsize},
	]
	
	\node[main node][label=left:{$(0,0)$}] (1) {};
	\node[main node][label={$(1,0)$}] (2) [above right of=1] {};
	\node[main node][label=below:{$(0,1)$}] (3) [below right of=1] {};
	\node[main node][label=right:{$(1,1)$}] (4) [above right of=3] {};
	
	\path[every node/.style={font=\sffamily\small}]
	(1) edge (2)
	edge (3)
	(2) edge (4)
	(3) edge (4);
	
	\path[dotted]
	(2) edge (3)
	(1) edge (4);
	
	\end{tikzpicture}
	\caption{The product bilattice $2\odot 2 $ (left), which is isomorphic to  Dunn-Belnap square $\mathbf{4}$ (middle), and its continuous pro\-ba\-bilistic extension (right). Negation flips the values along the horizontal line.
	} 
	\label{fig:square}
\end{figure}
\vspace*{-1cm}
\section{Preliminaries}\label{sec:prelim}
We will first introduce algebraic structures involved as algebras of truth-values in the resulting two-layer logics of belief presented in section \ref{sec:two:layer}, where we also motivate their choice (namely bilattices of Examples \ref{ex:four}, \ref{ex:probablisticsquare}, and \ref{ex:Luksqarebilattice}, and the product algebra of Example \ref{ex:standardLuk}). Then we briefly describe the Belnap-Dunn logic, and explain the approach to probability based on Belnap-Dunn logic.
\subsection{Some bilattices and MV algebras}\label{ssec:bilattices}
A \emph{bilattice} is an algebra
$\mathbf{B} = (B, \wedge, \vee, \sqcap, \sqcup, \neg)$
such that the reducts $(B, \wedge, \vee)$ and $(B,  \sqcap, \sqcup)$ are both lattices and the negation $\neg$ is a unary operation satisfying that for every $a,b\in B$,
\begin{align*}
\text{if } a \leqt b \text{ then } \neg b \leqt \neg a,
\qquad
\text{if } a \leqk b \text{ then } \neg a \leqk \neg b,
\qquad
a = \neg \neg a,
\end{align*}
with $\leqt$ (resp.\ $\leqk$) the order on $(B, \wedge, \vee)$ (resp.\ $(B,  \sqcap, \sqcup)$) called the truth (resp.\ knowledge or information) order.
A bilattice is \emph{interlaced} if each one of the four operations $\wedge$, $\vee$, $\sqcap$, $\sqcup$ is monotone w.r.t.\ both orders $\leqt$ and $\leqk$.
Bilattices, as well as interlaced bilattices, form a va\-rie\-ty. 

Given an arbitrary lattice
$\mathbf{L}=(L,\wedge_L,\vee_L)$, 
we can construct the \emph{product bilattice} 
$\mathbf{L} \odot \mathbf{L}=(L \times L , \wedge, \vee, \sqcap, \sqcup, \neg)$ as follows: for all $(a_1, a_2), (b_1,b_2) \in L \times L$, 
\begin{align*}
(a_1,a_2) \leqt(b_1,b_2)\ &\mbox{ iff } a_1\leq b_1\ \mbox{and}\ b_2\leq a_1\\
 \neg (a_1,a_2) &:= (a_2,a_1) \\
(a_1,a_2) \wedge (b_1,b_2) &:= (a_1 \wedge_L b_1, a_2 \vee_L b_2)
\\ 
(a_1,a_2) \vee (b_1,b_2) &:= (a_1 \vee_L b_1, a_2 \wedge_L b_2)
\\
(a_1,a_2) \sqcap (b_1,b_2) &:= (a_1 \wedge_L b_1, a_2 \wedge_L b_2)
\\ 
(a_1,a_2) \sqcup (b_1,b_2) &:= (a_1 \vee_L b_1, a_2 \vee_L b_2)
\end{align*}
$\mathbf{L} \odot \mathbf{L}$  is always an interlaced bilattice, and any interlaced bilattice can be represented as a product billatice:
a bilattice $\mathbf{B}$ is interlaced if and only if there is a lattice $\mathbf{L}$ such that $\mathbf{B} \cong \mathbf{L} \odot \mathbf{L}$ \cite{Avron96}.
\begin{example}
\label{ex:four}
The smallest interlaced bilattice is the product bilattice of the two-element lattice $\mathbf{2}\odot \mathbf{2}$ (Figure~\ref{fig:square} left). It is isomorphic to Dunn-Belnap square $\mathbf{4}$ used as a matrix of truth values for Belnap-Dunn logic  (Figure~\ref{fig:square} middle), with $\{t,b\}$ being the designated values. 
\end{example}
\begin{example}
\label{ex:probablisticsquare}
A probabilistic extension of Dunn-Belnap square (Figure~\ref{fig:square} right) can be seen as based on the pro\-duct bilattice $\mathbf{L}_{[0,1]} \odot \mathbf{L}_{[0,1]}$, where 
$\mathbf{L}_{[0,1]}= ([0,1], \min, \max).$
\end{example}
A \emph{residuated lattice} is an algebra 
$\mathbb{L}=(L, \wedge_L, \vee_L, \cdot, \setminus, / )$, where the reduct
$(L, \wedge_L, \vee_L)$ is a lattice, 
$(L, \cdot)$ is a semi-group (i.e.\ the operation $\cdot$ is associative) 
and the residuation properties hold: for all $a,b,c \in L$:
\begin{center}
\begin{tabular}{c c c}
   $a \cdot b \leq c$  & \quad iff  \quad
$b \leq a \setminus c$  & \quad iff \quad
$a \leq c / b.$
\end{tabular}
\end{center}
\begin{example}
\label{ex:standardLuk}

\begin{enumerate}
    \item 
$[0,1]_{\Luk}= ([0,1], \wedge, \vee, \&_{\Luk}, \rightarrow_{\Luk})$, the standard algebra of \L uka\-siewicz logic, is a residuated lattice, an MV algebra\footnote{For more on \L ukasiewicz logic and MV algebras (in particular finite standard completeness w.r.t. $[0,1]_{\Luk}$) see e.g. \cite{FuzzyHB2}.}, 
and it ge\-ne\-ra\-tes the variety of MV algebras. 
(As $\&_{\Luk}$ is commutative, the two implications coincide.) 
For all $a,b\in [0,1]$, we define a negation ${\sim_{\Luk}}a := a \rightarrow_{\Luk} 0 := 1 - a$,
and the standard operations
\begin{align*}
a \wedge b &:= \min \{ a,b\}, &
a \&_{\Luk} b &:= \max \{0, a+b-1 \} ,
\\
a \vee b &:= \max \{ a,b\}, &
a \rightarrow_{\Luk} b &:= \min \{ 1, 1 - a + b) \}.
\end{align*}
    \item 
$[0,1]^{op}_{\Luk}= ([0,1]^{op}, \vee, \wedge, \oplus_{\Luk}, \ominus_{\Luk})$ arises turning the standard algebra upside down, and is isomorphic to the original one. Here, we have ${\sim_{\Luk}}a := 1 \ominus_{\Luk} a$,
\begin{center}
\begin{tabular}{c c}
  $a \oplus_{\Luk} b \approx {\sim}a \rightarrow_{\Luk} b = \min \{1, a + b\}$ \  & \ $a \ominus_{\Luk} b \approx {\sim}(a \rightarrow_{\Luk} b) = \max \{0, a - b\}.$
\end{tabular}
\end{center}
    \item 
Finally, we will consider the product MV algebra $[0,1]_{\Luk} \times [0,1]^{op}_{\Luk} = ([0,1] \times [0,1]^{op}, \wedge, \vee, \&, \rightarrow)$ with operations defined pointwise, ${\sim}(a_1,a_2) := a\rightarrow (0,1) = ({\sim}a_1,{\sim}a_2)$, and:
\begin{align*}
(a_1,a_2) \& (b_1,b_2) &:= (a_1 \&_{\Luk} b_1, a_2 \oplus_{\Luk} b_2)\\
(a_1,a_2) \rightarrow (b_1,b_2) &:= (a_1 \rightarrow_{\Luk} b_1, b_2 \ominus_{\Luk} a_2) = (a_1 \rightarrow_{\Luk} b_1, {\sim}a_2 \&_{\Luk} b_2).
\end{align*}
As both the projections are surjective homomorphisms of MV algebras, this algebra also generates the variety of MV algebras. 
\end{enumerate}
\end{example}
Given a residuated lattice
$\mathbf{L}=(L, \wedge_L, \vee_L, \cdot, \setminus, / )$
the \emph{pro\-duct residuated bilattice} \cite{JR2012} 
$\mathbf{L} \odot \mathbf{L} =(L \times L, \wedge, \vee, \sqcap, \sqcup, \supset, \subset, \neg )$ is defined as follows:
the reduct $(L \times L, \wedge, \vee, \sqcap, \sqcup)$
is the product bilattice $(L, \wedge_L, \vee_L) \odot (L, \wedge_L, \vee_L)$ and, for all $(a_1, a_2), (b_1,b_2) \in L \times L$,
\begin{align*}
(a_1, a_2) \supset (b_1,b_2) & := (a_1 \setminus b_1, b_2 \cdot a_1),
&
(a_1, a_2) \subset (b_1,b_2) & := (a_1 / b_1, b_1 \cdot a_2).
\end{align*}
One can then define the following operations: for all $a,b \in L \times L$,
\begin{align*}
a \rightarrow b := (a \supset b) \wedge (\neg a  \subset \neg b), && a \leftarrow b := \neg a \rightarrow \neg b,
&&
a \ast b :=  \neg (b \rightarrow \neg a).
\end{align*}
For any product residuated bilattice, 
the structure 
$(L \times L, \wedge, \vee, \ast, \rightarrow, \leftarrow , \neg )$ is a residuated bilattice endowed with an involutive negation. If $\cdot$ is commutative (associative), so is $\ast$.
\begin{example}
\label{ex:Luksqarebilattice}
The product residuated bilattice arising from the standard MV algebra is the structure $[0,1]_{\Luk} \odot [0,1]_{\Luk} = ([0,1]\times [0,1], \wedge, \vee, \sqcap, \sqcup, \supset, \neg, (0,0))$ where:
\begin{align*}
(a_1,a_2) \ast (b_1,b_2) &:= (a_1 \&_{\Luk} b_1, (a_1 \rightarrow_{\Luk} b_2) \wedge (b_1 \rightarrow_{\Luk} a_2))\\
(a_1,a_2) \rightarrow (b_1,b_2) &:= ((a_1 \rightarrow_{\Luk} b_1) \wedge (b_2 \rightarrow_{\Luk} a_2), a_1 \&_{\Luk} b_2),
\end{align*}
and $(1,1)$ acts as the unit of the $\ast$: $(1,1)\ast a = a \ast (1,1) = a$.\footnote{\label{ft:imp:star}Definitions of $\rightarrow,\ast$ match those used in \cite{CDK06} for interval based fuzzy logics, via a transformation given by $(x_1,x_2)\mapsto(x_1,1-x_2)$ (symmetry across the $(0,0.5)(1,0.5)$ line).} We define 
\begin{center}
\begin{tabular}{c l}
  ${\sim}a:= (a\supset (0,0))\sqcup \neg(\neg a\supset (0,0))$ & $= ({\sim_{\Luk}}a_1,{\sim_{\Luk}}a_2)$  \\
 $a\oplus b := ({\sim}a\supset b)\sqcup\neg({\sim}\neg a\supset\neg b)$ & $=(a_1\oplus_{\Luk}b_1,a_2\oplus_{\Luk}b_2)$ \\
 $a\ominus b := {\sim}(a\supset b)\sqcap\neg{\sim}(\neg a\supset\neg b)$ & $=(a_1\ominus_{\Luk}b_1,a_2\ominus_{\Luk}b_2)$.
\end{tabular}
\end{center}
From \cite{JR2012}, we know that the (isomorphic copies of) product residuated bilattices obtained from MV algebras form a variety, and its axiomatization can be obtained by translating the one of MV algebras (in the language of residuated lattices)\footnote{\label{ft:axiom}\cite{JR2012} hints at the correspondence between subvarieties of residuated lattices and residuated bilattices being categorial. This would mean that the mentioned variety is in fact generated by the product bilattice of the standard MV algebra. One could then use the translation from \cite{JR2012} to obtain axiomatics of the logic introduced at the end of Subsection \ref{ssec:probabilisticbelief}.}. 
\end{example}
\subsection{Belnap-Dunn logic}\label{ssec:FDE}
Belnap-Dunn four-valued logic $\BD$, in the propositional language $\LBD$ built from a (countable) set $\Prop$ of propositional variables using connectives $\{\wedge, \vee, \neg\}$, evaluates formulas to Belnap-Dunn square -- the (de Morgan) lattice $\mathbf{4}$ built over an extended set of truth values $\{t,f,b,n\}$ (Figure \ref{fig:square}, middle). 

The consequence relation of logic $\BD$ is given, based on the logical matrix $(\mathbf{4},F)$ with $F = \{t,b\}$ being the designated values, as 
$$
\Gamma\vDash_{\BD}\f \ \mbox{iff} \ \forall e\ (e[\Gamma]\subseteq F \to e(\f)\in F).
$$
A frame semantics can also be given for $\BD$, in two ways. Belnap-Dunn \emph{four-valued model} is a tuple $\langle W,\mathbf{4}, e\rangle$ where $W$ is a set of states and $e$ is a valuation of atomic formulas  $e: \Prop \times W \to \mathbf{4}$. The valuation is extended to formulas of $\LBD$ using the algebraic operations on $\mathbf{4}$ in the expected way.

Following Dunn's approach \cite{dunn76}, we adopt a \emph{double va\-lua\-tion model} $M = \langle W, \Vdash^+, \Vdash^- \rangle$, giving the positive and negative support of atomic formulas in the states, extending in the following way: 
\begin{center}
	\begin{tabular}{lclclcl}
		$s\Vdash^+\f \vee \psi$ &\text{ iff }&$ s\Vdash^+\f\text{\ or\ } s\Vdash^+ \psi$,
		& $\qquad$
		& $s\Vdash^+\f \wedge \psi$ &\text{ iff }&$ s\Vdash^+\f\text{\ and\ } s\Vdash^+ \psi$,
		\\
		$s\Vdash^-\f \vee \psi$ &\text{ iff }&$ s\Vdash^-\f\text{\ and\ } s\Vdash^- \psi$,
		&&
		$s\Vdash^-\f \wedge \psi $ &\text{ iff }&$ s\Vdash^-\f \text{\ or\ } s\Vdash^-\psi$.\\	
		$s\Vdash^+\neg \f$ &\text{ iff }&$s\Vdash^- \f$
		& $\qquad$
		& $s\Vdash^-\neg \f$ &\text{ iff }&$ s\Vdash^+\f$
	\end{tabular}
\end{center}
It can be seen as locally evaluating formulas in the product bilattice $2\odot 2$ (Figure \ref{fig:square} left), and thus in $\mathbf{4}$ (Figure \ref{fig:square}, middle), connecting it with the four-valued frame semantics above. 
\BD\ is completely axiomatized using the following axioms and rules:
\begin{center}
	\begin{tabular}{llll}
	$\f\wedge\p\vdash\f\ $ & $\f\wedge\p\vdash\p\ $
	&
	$\f\vdash\p\vee\f\ $ & $\f\vdash\f\vee\p\ $\\
	$\f\vdash\neg\neg \f\ $ & $\neg\neg\f\vdash\f\ $
	&
	\multicolumn{2}{l}{$\f\wedge (\p\vee \chi) \vdash (\f\wedge\p)\vee(\f\wedge\chi)$}
	\end{tabular}

	$\dfrac{\f \vdash\p, \p\vdash\chi}{ \f \vdash \chi}$\ \
	$\dfrac{ \f \vdash\p, \f\vdash\chi}{ \f \vdash\p \wedge\chi}$\ \ 
	$\dfrac{\f \vdash\chi, \p\vdash\chi} {\f \vee\p \vdash \chi}$\ \ 
	$\dfrac{\f \vdash\p} {\neg\p \vdash \neg\f}$	
\end{center}
It is known to be (strongly) complete w.r.t. the algebraic and the double va\-lua\-tion (or 4-valued) frame semantics. \BD\ is also known to be \emph{locally finite}.\footnote{It means there are only finitely many (up to inter-derivability) formulas in a fixed finite set of propositional variables. It affects the completeness of the logic in Subsection~\ref{ssec:probabilisticbelief}. More on \BD\ and its properties can be found e.g. in the thesis \cite{P2018PhD}.}

We will use \BD\ in the two-layer formalism mainly to capture the underlying information states on which sources of probabilistic information are based.
\subsection{Non-standard probabilities}
\label{ssec:NSProb}
The idea of independence of positive and negative information naturally ge\-ne\-ra\-li\-zes to probabilistic extensions of \FDE logic as follows. A probabilistic Belnap-Dunn ($\BD$) model \cite{kleinetal2020} is a double valuation \FDE model extended with a classical probabi\-lity measure on the power set of states $P(W)$ generated by a mass function on the set of states $W$.\footnote{The probability of a set $X\subseteq W$ is defined as the sum of masses of its elements.} The positive and negative probabi\-lities of a formula are defined as (classical) measures of its positive and negative extensions: 
$$
p^+(\f) := \sum_{s\, \Vdash^+\f} \ m(s) \ \ \ \ \ \ \ \text{and} \ \ \ \ \ \ \ p^-(\f) := \sum_{s\, \Vdash^-\f} \ m(s).
$$
The probabilities satisfy the following axioms (see \cite[Lemma 1]{kleinetal2020}): 
\begin{center}
				\begin{tabular}{ll}
			(A1) normalization & $0 \le p (\f) \le 1$ \\
			(A2) monotonicity  & if $\f \vdash_\BD \p$ 
			then $p^+ (\f) \le p^+ (\p)$ and $p^- (\p) \le p^- (\f)$
			\\
			(A3) import-export  & $p (\f \wedge \p) + p (\f \vee \p) = p (\f) + p (\p)$.
		\end{tabular}
\end{center}
These axioms are weaker than classical Kolmogorovian ones. In particular, axiom A3 can be derived from Kolmogorovian axioms of additivity and normalization, but additivity is strictly stronger and cannot be derived from A1-A3 \footnote{Considering just the inequality $p (\f \vee \p) \geq p (\f) + p (\p) - p (\f \wedge \p)$ in place of A3, we obtain belief functions on (finite) distributive lattices \cite{Halpern05,Zhou13}.}. 
As $p(\neg\f) \neq 1- p(\f)$ in general, this account of probability admits positive probability of classical contradictions and thus allows for a non-trivial treatment of classically inconsistent information. 
When defined as above, the positive and negative probabilities are mutually definable via negation as $p^-(\f) = p^+(\neg\f)$.
It has been shown in \cite[Theorem 4]{kleinetal2020} that any non-standard probability assignment (i.e., positive and negative probability satisfying the four axioms) arises from a classical probability measure on a \BD\ double-valuation model as described above.

We can diagrammatically represent non-standard probabilities on a continuous extension of Belnap-Dunn square (Figure \ref{fig:square}, right), which we can see  as a pro\-duct bilattice $\mathbf{L}_{[0,1]} \odot \mathbf{L}_{[0,1]}$. 
For example, the point $(0,0)$ corresponds to no information being provided (neither $\f$ nor $\neg\f$ is supported by any state with positive measure in the underlying model), while $(1,1)$ is the point of maximally conflicting information (both $\f$ and $\neg\f$ are ``certain" - supported by every state with positive measure). 
The left-hand triangle $(1,0),(0,0),(0,1)$ corresponds to the cases of incomplete information, the right-hand triangle $(1,0),(1,1),(0,1)$ corresponds to the cases of conflicting information. 
The vertical dashed line corresponds to the ``classical" probabilities when positive and negative support sum up to 1. 
The horizontal line represents situations where we have as much information supporting $\f$ as contradicting it.
\begin{example}[Consulting a panel]
The company assembles a
panel to which they ask  whether a word describes the car well or not. That is, they ask how much they agree with the statements: ``the car has property $\phi$ (e.g.\ being a family car)?'' and ``the car does not have property $\phi$?''
If  humans were classical agents, every person would answer with a probability $p$ that belongs to the vertical line of the probabilistic extension of the Belnap-Dunn square. However, experience has shown that often people don't reason classically \cite{ASV2016}.
When a person answers $(p^+(\phi), p^-(\phi) )$, if 
$p^+(\phi)+ p^-(\phi) > 1$, then she is conflicted about whether the property $\phi$ describes the car, if $p^+(\phi)+ p^-(\phi) < 1$, then there might be some uncertainty on how to judge whether the car has property $\phi$.  
\end{example}
\subsection{Aggregating probabilities}
We model an agent that considers a set of topics listed by the atomic variables in $\Prop$, has access to sources giving information within the framework of non-standard probabilities (which we will call simply probabilities)
and builds beliefs based on these sources using a so-called \emph{aggregation strategy}. We focus on cases where the agent has no prior beliefs about the topics at stake. Depending on the context, the  aggregation strategy should satisfy different properties.

A \textit{source} $s$  is a probability assignment over the set of formulas $s  :  \ \LFDE \rightarrow [0,1]\times [0,1]$. In particular, we will later identify a source with a mass function on the \BD\ states of a double-valuation model.
An \textit{aggregation strategy} $\AggStrat$ is a function that takes in input a set of sources $\Sources = \{ s_i \}_{i \in I}$ and returns a map $\AggS : \  \LFDE \rightarrow [0,1]\times [0,1]$. For every  $\f \in \LFDE$, we denote $\AggS (\f)^+$ (resp.\ $\AggS (\f)^-$)  the positive (resp.\ negative) support assigned to $\f$.

$\AggStrat$ is \textit{monotone} if $\f \vdash \p$ implies $\AggS(\f) \leq \AggS(\p)$ for all $\f,\p \in \LFDE$ and for every $\Sources$. 
$\AggStrat$ is \textit{$\neg$-compatible} if $\AggS(\f)^- = \AggS(\neg\f)^+$  for every $\f \in \LFDE$ and for every $\Sources$. 
$\AggStrat$ \textit{preserves probabilities} if $\AggS$ is a probability for every $\Sources$.

Many aggregation strategies have been introduced on classical probabilities. Some of them, such as the (weighted) average, straightforwardly generalize to non-standard probabilities.
In the following, we present aggregation strategies for our two case studies.
\paragraph{Weighted average.} In Example \ref{case:cs}, 
a company has access to a huge amount of heterogeneous data from various sources and to software to analyse these data. A natural proposal is to grade every source $s_i$ with respect to its reliability $w_i$ and to take the \emph{weighted average} of the probabilities. The aggregation is then the map $\WA \ : \ \LBD \rightarrow [0,1]\times [0,1]$ such that, for every $\varphi \in \LBD$,
$$ \WA^+ (\varphi) := \frac{\sum_{1\leq i \leq n} w_i \cdot p_i^+(\varphi)}{\sum_{1\leq i \leq n} w_i}
\qquad \text{and} \qquad
\WA^- (\varphi) := \frac{\sum_{1\leq i \leq n} w_i \cdot p_i^-(\varphi)}{\sum_{1\leq i \leq n} w_i}.$$
One can easily prove that $\WA$ preserves probabilities, it is monotone, and $\neg$-compatible.
This aggregation strategy is however not feasible when modelling human reasoning.
\paragraph{A very cautious investigator.} In Example \ref{case:human}, an investigator
builds her opinion on the suspect based on different sources. We assume all the sources are equally reliable and the investigator does not want to draw conclusions hastily. Hence, she relies only on  statements all her sources agree on. The aggregation is then the map $\Min \ : \ \LBD \rightarrow [0,1]\times [0,1]$ such that
\begin{align*}
\Min (\varphi) := 
& \sqcap_{1\leq i \leq n} p_i(\varphi)
= \left(\min_{1\leq i \leq n} p_i^+(\varphi), 
\min_{1\leq i \leq n} p_i^-(\varphi)\right).
\end{align*}
\paragraph{Reasoning with trusted sources.} Staying in Example \ref{case:human}, we now assume all the sources are perfectly reliable. Hence, the investigator builds her belief on every statement supported by at least one source. The aggregation is then the map $\Max \ : \ \LBD \rightarrow [0,1]\times [0,1]$ such that 
\begin{align*}
\Max (\varphi) := 
& \sqcup_{1\leq i \leq n} p_i(\varphi)
= \left(\max_{1\leq i \leq n} p_i^+(\varphi), 
\max_{1\leq i \leq n} p_i^-(\varphi)\right).
\end{align*}
Here, one has high chances of  reaching contradiction. In a scientific analyses, if one gets experiments or information that is contradictory, there are two options. Either the information is incorrect or there is a mistake in the interpretation of the data.
Here, if the sources are 100\% reliable, reaching a contradiction state will simply indicate to our investigator that there is a flaw in her analysis of the problem and she needs to change perspective  to resolve the conflict.

The two latter aggregation strategies are monotone and $\neg$-compatible, and they in general do not preserve probabilities.
\section{Two-layer logics}\label{sec:two:layer}
To make a clear distinction between the level of events or facts, information on which the agent bases her beliefs, and the level of reasoning about her beliefs, we  use a \emph{two-layer} logical framework. The formalism originated with \cite{faginHalpernMegido1990,hajek1988}, and was further developed in \cite{CintulaN14,baldietal2020} into an abstract algebraic framework with a general theory of syntax, semantics and completeness (we will employ this framework to derive completeness of the logics we define).

Syntax $(\mathcal{L}_e, \mathcal{M}, \mathcal{L}_u)$ of a two layer logic $\mathcal{L}$ consists of a lower language  $\mathcal{L}_e$ of events or facts (we denote formulas of $\mathcal{L}_e$ by $\f,\psi,\ldots$), an upper language $\mathcal{L}_u$ (we denote formulas of $\mathcal{L}_u$ by $\alpha,\beta,\ldots$), and a set of unary modalities $\mathcal{M}$ which can only be applied to a non-modal formula of $\mathcal{L}_e$, forming a modal atomic formula of $\mathcal{L}_u$ (in particular, no nesting of modalities can occur). 

Semantics of a two layer logic $\mathcal{L}$ is, in the abstract approach of \cite{CintulaN14}, based on \emph{frames} of the form $F = (W, E, U, \langle \mu^{\heartsuit}\rangle_{\heartsuit\in \mathcal{M}}),$  where $W$ is a set of states, $E$ is a local algebra of evaluation of the lower language $\mathcal{L}_e$ within the states\footnote{For this paper, we always consider the lower algebras be the same for all states. But different algebras can be later used when modelling heterogeneous information.}, $U$ is an upper-level algebra interpreting the modal formulas, and for each modality its semantics is given by a map $\mu^{\heartsuit}: \prod_{s\in W} E \to U$, returning a value in the upper-level algebra for a tuple of values from the lower algebra (assigned to an argument formula within the states). We write algebras, but often we need to use matrices to evaluate formulas, i.e.\ algebras with a set of designated values. Such a frame is called $E$-based and $U$-measured. A model is a frame equipped with valuations of $\mathcal{L}_e$ in $E$ (the values of atomic modal formulas are then computed by $\mu$, and values of modal formulas are computed in $U$ in an expected way). A non-modal formula $\f$ is valid in a model iff it is assigned a designated value in $E$ by all the states, a modal formula $\alpha$ is valid in a model iff its value is designated in $U$. A consequence relation is defined via preserving validity in every model. It is of the sorted form $\Psi,\Gamma \vDash \xi$ where $\Psi\subseteq\mathcal{L}_e, \Gamma\subseteq\mathcal{L}_u,\xi\in \mathcal{L}_e\cup\mathcal{L}_u$.

The resulting logic as an axiomatic system $L = (L_e, M, L_u)$ consists of an axiomatics of the lower logic $L_e$, modal rules (i.e. rules with non-modal premises and modal conclusion) and modal axioms (modal rules with zero premises) $M$, and an axiomatics of the upper logic $L_u$. Proofs can be defined in the expected way. We can see that $\Psi,\Gamma \vdash \f$ iff $\Psi\vdash_{L_e} \f$, and $\Psi,\Gamma \vdash \alpha$ iff $\Psi_{MR},\Gamma \vdash_{L_u} \alpha$, where $\Psi_{MR}$ consists of conclusions of modal rules whose premises are derivable from $\Psi$ in $L_e$ (for more detail see \cite[Proposition 3]{baldietal2020}).
\subsection{Logic of probabilistic belief}\label{ssec:probabilisticbelief}
In scenarios like that of Example \ref{case:cs}, it is reaso\-na\-ble to represent agents beliefs as probabilities. In such two-layer logics, the bottom layer is that of events or facts, represented by \BD-information states. A source provides probabilistic information given as a mass function on the states, multiple sources are to be aggregated with an aggregation strategy preserving probabilities. The modality is that of probabilistic belief. For the upper-layer -- the logic of thus formed beliefs -- we propose two logics derived from \L ukasiewicz logic. The main reason to choose \L ukasiewicz logic as a starting point is that it can express the probability axioms, and contains a well-behaved (continuous) implication. We however also aim at a formalism that allows us to separate the positive and negative dimensions of information or support also on the level of beliefs (just like \BD\ does on the lower level). This motivates the use of product or bilattice algebras (those of Examples \ref{ex:standardLuk} and \ref{ex:Luksqarebilattice}) on the upper level.
\par\medskip\noindent
\textit{I. An extension of \L ukasiewicz logic with bilattice negation.} Consider the product of the standard algebra of \L ukasiewicz logic $[0,1]_{\Luk}= ([0,1], \wedge, \vee, \&_{\Luk}, \rightarrow_{\Luk})$ with the algebra $[0,1]^{op}_{\Luk} = ([0,1]^{op}, \vee, \wedge, \oplus_{\Luk}, \ominus_{\Luk})$, as introduced in Example \ref{ex:standardLuk}(3.), with only $(1,0)$ as the designated value. 
The logic of this product algebra (understood as the set of theorems - formulas always evaluated at $(1,0)$ - or as a consequence relation preserving the value $(1,0)$) is  \L ukasiewicz logic $\Luk$. It can be axiomatized in the (complete) language $\{\rightarrow, \sim\}$ by axioms of weakening, suffixing, commutativity of disjunction, and contraposition, and the rule of Modus Ponens (see the axioms below). To be able to operate the pairs of values as a positive and negative support of formulas, we extend the signature of the algebra with the bilattice negation $\neg(a_1,a_2)=(a_2,a_1)$, and extend the language to $\{\rightarrow, \sim, \neg\}$ (notice in particular, that $\oplus$ and $\ominus$ can be defined as in Example \ref{ex:standardLuk}). We obtain the following axioms and rules, and denote the resulting consequence relation $\vdash_{\Luk(\neg)}$:
\begin{align*}
\alpha &\rightarrow (\beta\rightarrow\alpha) & \neg\neg\alpha &\leftrightarrow \alpha\\
(\alpha\rightarrow\beta) &\rightarrow ((\beta\rightarrow\gamma)\rightarrow(\alpha\rightarrow\gamma)) & \neg{\sim}\alpha &\leftrightarrow {\sim}\neg\alpha\\
((\alpha\rightarrow\beta)\rightarrow\beta) &\rightarrow ((\beta\rightarrow\alpha)\rightarrow\alpha) & ({\sim}\neg\alpha\rightarrow{\sim}\neg\beta) &\leftrightarrow {\sim}\neg(\alpha\rightarrow\beta)\\
({\sim}\beta\rightarrow{\sim}\alpha) &\rightarrow (\alpha\rightarrow\beta) & \alpha, \alpha\rightarrow\beta / \beta &\ \ \ \alpha / {\sim}\neg\alpha
\end{align*}
The $\neg$ negations can be pushed to the atomic formulas, and we can thus consider formulas up to provable equivalence in a \emph{negation normal form (nnf)}, i.e.\ formulas built using $\{\rightarrow, \sim\}$ from \emph{literals} of the form $p, \neg p$. 
 It is easy to see, because we have $\neg{\sim}\alpha\leftrightarrow {\sim}\neg\alpha$ and $\neg (\alpha\rightarrow\beta)\leftrightarrow {\sim}{\sim}\neg (\alpha\rightarrow\beta)\leftrightarrow {\sim}({\sim}\neg\alpha\rightarrow {\sim}\neg\beta)$ provable. A procedure can be defined which turns each $\alpha$ into $\alpha^\neg$ in nnf, so that we can prove, by induction, that $({\sim}\alpha)^\neg\leftrightarrow {\sim}\alpha^\neg$ and $ (\alpha\rightarrow\beta)^\neg\leftrightarrow{\sim}({\sim}\alpha^\neg\rightarrow {\sim}\beta^\neg)$.
We  denote $\boxdot\Gamma := {\sim}\neg\Gamma\cup\Gamma$. 
\begin{lemma}\label{lem:lukproduct-luk}
For any \emph{finite} set of formulas $\Gamma,\alpha$ in a nnf,
\begin{center}
\begin{tabular}{c c}
   $\Gamma\vdash_{\Luk(\neg)}\alpha \ $  iff  \ &  for some finite $\Delta:\ \boxdot\Gamma,\Delta\vdash_{\Luk}\alpha,$
\end{tabular}
\end{center}
where $\Delta$ contains instances of $\neg$-axioms. 
\end{lemma}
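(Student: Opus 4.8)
The plan is to prove the two implications separately. The right-to-left direction is routine: any $\Luk$-derivation is already a $\Luk(\neg)$-derivation, since the \L ukasiewicz axioms and Modus Ponens are among the axioms and rules of $\Luk(\neg)$; each $\gamma\in\Gamma$ yields ${\sim}\neg\gamma$ by one application of the rule $\alpha/{\sim}\neg\alpha$, so $\Gamma\vdash_{\Luk(\neg)}\boxdot\Gamma$; and every member of $\Delta$, being an instance of a $\neg$-axiom, is a theorem of $\Luk(\neg)$. Composing these with the assumed $\Luk$-derivation of $\alpha$ from $\boxdot\Gamma,\Delta$ gives $\Gamma\vdash_{\Luk(\neg)}\alpha$.

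For the converse I would argue by induction on the length of a $\Luk(\neg)$-derivation $\beta_1,\dots,\beta_n=\alpha$ of $\alpha$ from $\Gamma$. By the nnf remarks preceding the lemma one may assume each $\beta_i$ is in nnf, so that $\neg$ occurs only on atoms and every formula in play is a $\{\rightarrow,{\sim}\}$-formula over the literals (reading $\neg p$ as a fresh variable, so that $\vdash_{\Luk}$ applies to it); the rule $\alpha/{\sim}\neg\alpha$ then sends a line $\gamma$ to the nnf of ${\sim}\neg\gamma$. The induction hypothesis has to be strengthened to both polarities at once: for each $i$ there is a finite set $\Delta_i$ of instances of the $\neg$-axioms with $\boxdot\Gamma,\Delta_i\vdash_{\Luk}\beta_i$ \emph{and} $\boxdot\Gamma,\Delta_i\vdash_{\Luk}{\sim}\neg\beta_i$. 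Carrying the second derivation along is exactly what disarms the rule $\alpha/{\sim}\neg\alpha$: if $\beta_i={\sim}\neg\beta_j$ then $\beta_i$ is already available by the induction hypothesis, and ${\sim}\neg\beta_i={\sim}\neg{\sim}\neg\beta_j$ collapses to $\beta_j$ using $\neg\neg\alpha\leftrightarrow\alpha$, $\neg{\sim}\alpha\leftrightarrow{\sim}\neg\alpha$ and the $\Luk$-theorem ${\sim}{\sim}x\leftrightarrow x$. The Modus Ponens step is symmetric in the two polarities: the positive one is MP itself, and for the negative one the $\neg$-axiom $({\sim}\neg\alpha\rightarrow{\sim}\neg\beta)\leftrightarrow{\sim}\neg(\alpha\rightarrow\beta)$ turns ${\sim}\neg(\beta_k\rightarrow\beta_i)$ into ${\sim}\neg\beta_k\rightarrow{\sim}\neg\beta_i$, whence ${\sim}\neg\beta_i$ by MP with ${\sim}\neg\beta_k$. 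If $\beta_i\in\Gamma$ both polarities lie in $\boxdot\Gamma$; if $\beta_i$ is itself a $\neg$-axiom instance one puts $\beta_i$, and its ${\sim}\neg$-image (again a $\neg$-axiom instance modulo $\Luk$-equivalence), into $\Delta_i$. Setting $\Delta:=\Delta_n$, a finite union of finite sets, and reading $\alpha$ off the positive polarity at $i=n$ finishes the argument.

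The step I expect to be the main obstacle is the base case where $\beta_i$ is an instance of a \L ukasiewicz axiom schema: one must check that its ${\sim}\neg$-image is $\Luk$-derivable modulo finitely many $\neg$-axiom instances. This is the one place calling for actual computation, done schema by schema. Iterating $({\sim}\neg\alpha\rightarrow{\sim}\neg\beta)\leftrightarrow{\sim}\neg(\alpha\rightarrow\beta)$, together with $\neg{\sim}\alpha\leftrightarrow{\sim}\neg\alpha$ and $\neg\neg\alpha\leftrightarrow\alpha$ on the ${\sim}$-subformulas, pushes ${\sim}\neg$ down to the literals and transforms, e.g., ${\sim}\neg\big(\alpha\rightarrow(\beta\rightarrow\alpha)\big)$ into the weakening instance ${\sim}\neg\alpha\rightarrow({\sim}\neg\beta\rightarrow{\sim}\neg\alpha)$, and likewise for suffixing, for $\big((\alpha\rightarrow\beta)\rightarrow\beta\big)\rightarrow\big((\beta\rightarrow\alpha)\rightarrow\alpha\big)$, and for contraposition; conceptually this works because $\Luk$ is schematic, so replacing the literals by their ${\sim}\neg$-images sends axiom instances to axiom instances, the equivalences consumed along the way being precisely those gathered into $\Delta$. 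Once this ``${\sim}\neg$-translation of the \L ukasiewicz axioms'' is in place, the induction closes and the lemma follows.
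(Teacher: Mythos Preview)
Your argument is correct and rests on the same core observation as the paper's: the rule $\alpha/{\sim}\neg\alpha$ can be absorbed because ${\sim}\neg$ distributes through $\rightarrow$ and ${\sim}$ (this being exactly the content of the $\neg$-axioms), so carrying the ${\sim}\neg$-polarity alongside each line eliminates the rule, after which only $\Luk$-reasoning from $\boxdot\Gamma$ and finitely many $\neg$-axiom instances remains.

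The paper packages the same idea a bit more economically. Rather than a single two-polarity induction, it introduces the intermediate system $\Luk(\neg)^-$ (that is, $\Luk(\neg)$ without the ${\sim}\neg$-rule) and splits the left-to-right direction into two separate equivalences: first $\Gamma\vdash_{\Luk(\neg)}\alpha$ iff $\boxdot\Gamma\vdash_{\Luk(\neg)^-}\alpha$, and then $\boxdot\Gamma\vdash_{\Luk(\neg)^-}\alpha$ iff $\boxdot\Gamma,\Delta\vdash_{\Luk}\alpha$ for a suitable finite $\Delta$. The second step becomes trivial (just list the $\neg$-axiom instances used), and the first is the ``routine induction'' whose content is exactly your two-polarity hypothesis --- only now the $\neg$-axioms are still available as axioms rather than having to be collected into $\Delta$ on the fly, which spares you the case analysis you flag as the main obstacle (the ${\sim}\neg$-image of a $\Luk$-axiom is immediately an $\Luk(\neg)^-$-theorem). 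Your more explicit route has the compensating virtue of making visible precisely which $\neg$-axiom instances end up in $\Delta$.
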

\begin{proof}
The right-left direction is almost trivial: $\Luk$ is a subsystem of $\Luk(\neg)$, and all the axioms in $\Delta$ are provable in $\Luk(\neg)$, and, thanks to the ${\sim}\neg$-rule, $\Gamma\vdash_{\Luk(\neg)}\boxdot \gamma$ for each $\gamma\in\Gamma$.

For the other direction, we proceed in a few steps. First, we denote by $\vdash_{\Luk(\neg)^-}$ provability in $\Luk(\neg)$ \emph{without} the ${\sim}\neg$-rule. By routine induction on proofs (and using that ${\sim}\neg$ distributes from/to implications and negations), we can see that
\begin{center}
\begin{tabular}{c c}
   $\Gamma\vdash_{\Luk(\neg)}\alpha \ $  iff   & \  $\boxdot\Gamma\vdash_{\Luk(\neg)^-}\alpha.$
\end{tabular}
\end{center}
Then we can list all the instances of $\neg$-axioms in the proof in $\Delta$, and obtain:
\begin{center}
\begin{tabular}{c c}
   $\boxdot\Gamma\vdash_{\Luk(\neg)^-}\alpha \ $  iff   & \ $ \boxdot\Gamma,\Delta\vdash_{\Luk}\alpha.$
\end{tabular}
\end{center}
First, note that we can list in $\Delta$ all instances of $\neg$-axioms for all subformulas of $\Gamma,\alpha$ as well and still keep the Lemma valid. This will come handy in the following proof. Second, we stress that in the final proof $\boxdot\Gamma,\Delta\vdash_{\Luk}\alpha$ in $\Luk$, we still use language of $\Luk(\neg)$, where formulas starting with $\neg$ are seen from the point of view of $\Luk$ as atomic.\hfill $\Box$
\end{proof}
Lemma \ref{lem:lukproduct-luk} provides a translation of provability in $\Luk(\neg)$ to pro\-va\-bility in $\Luk$ and allows us to observe that the extension of $\Luk$ by $\neg$ is conservative. Now, using finite completeness of $\Luk$, we can see that $\Luk(\neg)$ is finitely strongly complete w.r.t.\ $[0,1]_{\Luk} \times [0,1]^{op}_{\Luk}$:
\begin{lemma}[Finite strong standard completeness of $\Luk(\neg)$]\label{lem:lukproductcompleteness} For a \emph{finite} set of formulas $\Gamma$,
\begin{center}
\begin{tabular}{c c}
   $\Gamma\vdash_{\Luk(\neg)}\alpha$\ \ iff  & \ \ $\forall  e:\mathcal{L} \to [0,1]_{\Luk} \times [0,1]^{op}_{\Luk}\ (e[\Gamma]\subseteq \{(1,0)\} \to e(\alpha) = (1,0)).$
\end{tabular}
\end{center}
\end{lemma}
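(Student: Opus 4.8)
The plan is to establish the two directions separately, obtaining the harder (right-to-left) one from Lemma~\ref{lem:lukproduct-luk} together with finite strong standard completeness of $\Luk$.

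\emph{Soundness.} First I would verify, by induction on a derivation, that $\Gamma\vdash_{\Luk(\neg)}\alpha$ forces $e(\alpha)=(1,0)$ whenever $e[\Gamma]\subseteq\{(1,0)\}$. Since $(1,0)$ is the top element of $[0,1]_{\Luk}\times[0,1]^{op}_{\Luk}$ and the $\{\rightarrow,{\sim}\}$-reduct of this algebra is a product of MV algebras, the \L ukasiewicz axioms evaluate to $(1,0)$ and Modus Ponens preserves it coordinatewise. For the three $\neg$-axioms and the rule $\alpha/{\sim}\neg\alpha$, one uses the coordinate-swap definition $\neg(a_1,a_2)=(a_2,a_1)$ together with the arithmetic identities linking $\ominus_{\Luk},\rightarrow_{\Luk},{\sim}_{\Luk}$ recorded in Example~\ref{ex:standardLuk} (e.g.\ ${\sim}_{\Luk}(a\rightarrow_{\Luk}b)=a\ominus_{\Luk}b$ and contraposition); this is routine checking.

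\emph{Completeness.} Here I would argue contrapositively, so assume $\Gamma\not\vdash_{\Luk(\neg)}\alpha$. Because every formula is $\Luk(\neg)$-provably equivalent to its negation normal form (and, by soundness, receives the same value under every $e$), we may assume $\Gamma$ and $\alpha$ are in nnf. Let $\Delta$ be the finite set of all instances of the three $\neg$-axioms over subformulas of $\Gamma,\alpha$. By Lemma~\ref{lem:lukproduct-luk} we then have $\boxdot\Gamma,\Delta\not\vdash_{\Luk}\alpha$, where $\Luk$ treats each formula beginning with $\neg$ as a fresh propositional variable; by finite strong standard completeness of $\Luk$ there is a \L ukasiewicz valuation $v$ with $v[\boxdot\Gamma\cup\Delta]\subseteq\{1\}$ and $v(\alpha)\neq1$. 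Define $e\colon\mathcal{L}\to[0,1]_{\Luk}\times[0,1]^{op}_{\Luk}$ by $e(p):=(v(p),v(\neg p))$ on propositional variables and extend it homomorphically; note $\neg e(p)=(v(\neg p),v(p))$. Writing $\bar\chi$ for the negation normal form of $\neg\chi$, the core claim is that $e(\chi)=(v(\chi),v(\bar\chi))$ for every nnf subformula $\chi$ of $\Gamma,\alpha$. This is proved by induction: the literal cases are immediate from the definition of $e$, the ${\sim}$-case from ${\sim}(a_1,a_2)=({\sim}_{\Luk}a_1,{\sim}_{\Luk}a_2)$ and $\overline{{\sim}\psi}={\sim}\bar\psi$, and the case $\chi=\psi_1\rightarrow\psi_2$ from the definition $(a_1,a_2)\rightarrow(b_1,b_2)=(a_1\rightarrow_{\Luk}b_1,b_2\ominus_{\Luk}a_2)$ in the product algebra, the syntactic identity $\overline{\psi_1\rightarrow\psi_2}={\sim}({\sim}\bar\psi_1\rightarrow{\sim}\bar\psi_2)$, and the \L ukasiewicz identities above.

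It remains to read off the countermodel. For $\gamma\in\Gamma$ we have $v(\gamma)=1$, and since ${\sim}\neg\gamma\in\boxdot\Gamma$ we get $v(\neg\gamma)=0$ (with $\neg\gamma$ an atom of the \L ukasiewicz reduction); moreover the instances in $\Delta$ suffice to prove $\neg\gamma\leftrightarrow\bar\gamma$ in $\Luk$, so $v(\bar\gamma)=v(\neg\gamma)=0$ and hence $e(\gamma)=(v(\gamma),v(\bar\gamma))=(1,0)$. On the other hand $e(\alpha)=(v(\alpha),v(\bar\alpha))$ with $v(\alpha)\neq1$, so $e(\alpha)\neq(1,0)$, and $e$ witnesses $\Gamma\not\vDash\alpha$. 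The only genuinely delicate point — and the one I expect to need the most care — is this last bridge: reconciling the treatment of $\neg$ as a primitive coordinate-swapping operation of the upper algebra with its treatment as generating ``new atoms'' in the reduction supplied by Lemma~\ref{lem:lukproduct-luk}. This is exactly why $\Delta$ must carry the $\neg$-axiom instances and why the identity $v(\bar\gamma)=v(\neg\gamma)$ is available; once it is in place, the remaining steps are a routine induction and MV-algebraic computation.
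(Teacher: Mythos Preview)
Your proposal is correct and follows essentially the same route as the paper's own proof: soundness by checking axioms and rules against the product algebra, and completeness via Lemma~\ref{lem:lukproduct-luk}, finite strong standard completeness of $\Luk$, the definition $e(p)=(v(p),v(\neg p))$, and the inductive claim $e(\chi)=(v(\chi),v(\bar\chi))$ using that $v$ validates the $\neg$-axiom instances in $\Delta$. You have also correctly isolated the one delicate point (bridging the atom-level treatment of $\neg$ in the $\Luk$-reduction with its semantic coordinate-swap via the $\Delta$-instances), which is exactly where the paper relies on $\Delta$ containing all $\neg$-axiom instances for subformulas of $\Gamma,\alpha$.
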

\begin{proof}
The left-right direction is soundness, and consists of checking that the axioms are valid and the rules sound. We only do some cases:

First the ${\sim}\neg$-rule: assume that $e$ is given and $e(\alpha)=(1,0)$. Then $e({\sim}\neg\alpha) = {\sim}\neg (1,0) = {\sim}(0,1) = (1,0)$.

Next, for any $e$, $e({\sim}\neg(\alpha\rightarrow\beta))={\sim}\neg(e(\alpha)\rightarrow e(\beta))={\sim}\neg(e(\alpha)_1\rightarrow_{\Luk}e(\beta)_1, {\sim_{\Luk}}(e(\beta)_2\rightarrow_{\Luk} e(\alpha)_2 )) = ((e(\beta)_2\rightarrow_{\Luk} e(\alpha)_2 ), {\sim_{\Luk}}(e(\alpha)_1\rightarrow_{\Luk}e(\beta)_1))$,

and, $e({\sim}\neg\alpha\rightarrow{\sim}\neg\beta) = {\sim}\neg e(\alpha)\rightarrow {\sim}\neg e(\beta) = ({\sim_{\Luk}}e(\alpha)_2,{\sim_{\Luk}}e(\alpha)_1)\rightarrow({\sim_{\Luk}}e(\beta)_2,{\sim_{\Luk}}e(\beta)_1) = ({\sim_{\Luk}}e(\alpha)_2\rightarrow_{\Luk}{\sim_{\Luk}}e(\beta)_2,e(\alpha)_1 \&_{\Luk} {\sim_{\Luk}}e(\beta)_1) = ((e(\beta)_2\rightarrow_{\Luk} e(\alpha)_2 ), {\sim_{\Luk}}(e(\alpha)_1\rightarrow_{\Luk}e(\beta)_1)).$

Last, for any $e$, $e(\neg{\sim}\alpha) = \neg{\sim}e(\alpha) = ({\sim_{\Luk}}e(\alpha)_2,{\sim_{\Luk}}e(\alpha)_1) = {\sim}\neg e(\alpha) = e({\sim}\neg\alpha)$.

For the other direction, let us assume that $\Gamma\nvdash_{\Luk(\neg)}\alpha$. Then for some finite $\Delta$ containing instances of $\neg$-axioms (in particular those for subformulas of $\Gamma,\alpha$), we have $\boxdot\Gamma,\Delta\nvdash_{\Luk}\alpha.$ Because $\Luk$ is finitely standard complete, there is an evaluation $e:\mathrm{At} \to [0,1]_{\Luk}$ sending all formulas in $\boxdot\Gamma,\Delta$ to $1$, while $e(\alpha)< 1$. Here, $\mathrm{At}$ contains literals from $\Gamma,\alpha$ of the form $p,\neg p$, and atoms and formulas of the form $\neg\delta$ from $\boxdot\Gamma,\Delta$. We define $e': \Prop \to [0,1]_{\Luk}\times [0,1]_{\Luk}^{op}$ by $e'(p) = (e(p), e(\neg p))$. We can then prove, by routine induction, that for each formula $e'(\beta) = (e(\beta),e(\beta^\neg))$. We use the fact that $e[\Delta]\subseteq\{1\}$, and $\Delta$ contains all instances of $\neg$-axioms for all subformulas of $\Gamma,\alpha$.

We now immediately see that $e'(\alpha) < (1,0)$, because $e(\alpha) < 1$. 
To prove that indeed $e'[\Gamma]\subseteq\{(1,0)\}$, we use the fact that $e[\boxdot\Gamma]\subseteq\{1\}$: as for all $\gamma\in\Gamma$, $e({\sim}\neg\gamma) = 1$, $e(\neg\gamma) = e(\gamma^\neg) = 0$. 
For the latter, we again need to use the fact that $e[\Delta]\subseteq\{1\}$, and $\Delta$ contains all instances of $\neg$-axioms for all subformulas of $\Gamma$, as they prove, by means of $\Luk$, that $\neg\gamma\leftrightarrow\gamma^\neg$, and $e$ has to respect that. 
Now we conclude, that for all $\gamma\in\Gamma$, $e'(\gamma) = (e(\gamma),e(\gamma^\neg)) = (1,0)$. \hfill $\Box$
\end{proof}
We can now put together the two-layer syntax of the first two-layer logic to be
\begin{itemize}
    \item $\mathcal{L}_e = \{\wedge,\vee,\neg\}$ language of \BD,
    \item $\mathcal{M} = \{B\}$ a belief modality,
    \item $\mathcal{L}_u = \{\rightarrow, {\sim}, \neg\}$ language of $\Luk(\neg)$.
\end{itemize}
The intended models can be described as follows: the lower layer is a double-valuation model of \BD\ $(W,\Vdash^+,\Vdash^-)$ (a set of states $W$, and the two support relations, which in fact can be seen as arising from an evaluation of formulas of \BD\ locally in the states in the product bilattice $2\odot 2$, which is isomorphic to $\mathbf{4}$, as noted in Subsection \ref{ssec:FDE}). A source is given by a mass function on the states $m_i: W \to [0,1]$, we assume there are $n$ sources, and each source comes with a weight $w_i\in [0,1]$. For a non-modal formula $\f\in\mathcal{L}_e$, we obtain the value $||B\f||\in[0,1]_{\Luk} \times [0,1]^{op}_{\Luk}$ as a pair of its positive and negative probabilities as follows.
First, for each source $m_i$, we have $(\sum_{v\, \Vdash^+\f} m_i(v), \sum_{v\, \Vdash^-\f} m_i(v)) = (p_i^+(\f),p_i^-(\f)).$ Now, applying the weighted average aggregation strategy we obtain
$$||B\f|| = \left( \frac{\sum_{1\leq i \leq n} w_i \cdot p_i^+(\f)}{\sum_{1\leq i \leq n} w_i},\frac{\sum_{1\leq i \leq n} w_i \cdot p_i^-(\f)}{\sum_{1\leq i \leq n} w_i}\right).$$
The modal part $M$ of the two-layer logic consists of two axioms and a rule reflecting directly the axioms of probabilities
listed in Subsection \ref{ssec:NSProb}:\footnote{Considering just the right-left implication in the first axiom, we can express belief functions.}
\begin{center}
\begin{tabular}{l c}
   $B(\f\vee\p)\leftrightarrow(B\f \ominus B(\f\wedge\psi))\oplus B\psi$ & \ \  $B\neg\f\leftrightarrow\neg B\f$\\
$\f\vdash_{\BD}\psi / \vdash_{\Luk(\neg)} B\f \rightarrow B\psi$ &
\end{tabular}
\end{center}
The resulting logic is $(\BD, M, \Luk(\neg))$. As $\BD$ is locally finite and strongly complete w.r.t.\ $\mathbf{4}$, and $\Luk(\neg)$ is finitely strongly complete w.r.t.\ $[0,1]_{\Luk} \times [0,1]^{op}_{\Luk}$, we can apply \cite[Theorems 1 and 2]{CintulaN14} directly to obtain finite strong completeness (soundness of the modal axioms and rules is easy to see). But first, we need to observe that the frames as we have described them can be seen within the framework of \cite{CintulaN14}:

The frames, seen in the format of \cite{CintulaN14}, are $F = (W, \mathbf{4}, [0,1]_{\Luk}\times [0,1]^{op}_{\Luk}, \mu^B)$, formulas of $\mathcal{L}_e$ are evaluated locally in the states of $W$ using $\mathbf{4}$, as in the four-valued models for \BD\ (which we can see as equivalent to the double-valuation models). The interpretation of modalities $\mu^B$ is computed as follows. A source is given by a mass function on the states $m: W \to [0,1]$. Each source comes with a weight $w_i\in [0,1]$. Given $\mathbf{e}\in\prod_{v\in W}\mathbf{4}$, we obtain, for each source $m_i$, first the following sums of weights over states: 
$(\sum_{\mathbf{e}_v\in\{t,b\}} m_i(v), \sum_{\mathbf{e}_v\in\{f,b\}}m_i(v)) = (p_i^+(\mathbf{e}),p_i^-(\mathbf{e}))$. The assignment $\mu^B$ now computes the weighted average of those as follows:
$$ \mu^B(\mathbf{e}) = \WA(\mathbf{e}) = \left(\frac{\sum_{1\leq i \leq n} w_i \cdot p_i^+(\mathbf{e})}{\sum_{1\leq i \leq n} w_i},\frac{\sum_{1\leq i \leq n} w_i \cdot p_i^-(\mathbf{e})}{\sum_{1\leq i \leq n} w_i}\right).$$

Thus, for a non-modal formula $\f\in\mathcal{L}_e$, applying $\mu^B$ to the tuple of its values in the states (which we denote by $||\phi||$), we obtain the value of $B\f$ as $||B\f||\in[0,1]_{\Luk} \times [0,1]^{op}_{\Luk}$ as a pair of its positive and negative probabilities as follows:
First, for each source we have\footnote{The value of $\f$ in $v$ being among $\{t,b\}$ means it is positively supported in $v$, i.e.  $v\Vdash^+\f$. Similarly $\{f,b\}$ means negative support.} 
$$
\left(\sum_{v\, \Vdash^+\f} m_i(v), \sum_{v\, \Vdash^-\f} m_i(v)\right) = (p_i^+(\f),p_i^-(\f) ).
$$
Now, applying the weighted average aggregation we obtain
$$
||B\f|| = \mu^B(||\phi||) = \WA(||\phi||) = \left(\frac{\sum_{1\leq i \leq n} w_i \cdot p_i^+(\f)}{\sum_{1\leq i \leq n} w_i},\frac{\sum_{1\leq i \leq n} w_i \cdot p_i^-(\f)}{\sum_{1\leq i \leq n} w_i} \right).
$$
We can now conclude the completeness as follows:
\begin{corollary}
$(\BD, M, \Luk(\neg))$ is \emph{finitely strongly complete} w.r.t.\ $\mathbf{4}$ based, $[0,1]_{\Luk} \times [0,1]^{op}_{\Luk}$-measured frames validating $M$.
\end{corollary}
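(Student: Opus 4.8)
The plan is to obtain the corollary as an instance of the general completeness-transfer theorem for two-layer logics, namely \cite[Theorems 1 and 2]{CintulaN14}, applied to the data assembled above. That theorem says, roughly, that a two-layer logic $(L_e, M, L_u)$ is finitely strongly complete with respect to its class of $E$-based, $U$-measured frames validating $M$ as soon as: (a) $L_e$ is complete with respect to $E$ and is sufficiently well-behaved (here, locally finite); (b) $L_u$ is finitely strongly complete with respect to $U$; and (c) the modal part $M$ is sound with respect to the intended measured frames and, conversely, every assignment of values to the modal atoms consistent with $M$ is realised by some measured frame in the class. So the proof reduces to checking (a)--(c) for $E=\mathbf{4}$, $U=[0,1]_{\Luk}\times[0,1]^{op}_{\Luk}$, and $M$ as displayed.

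Items (a) and (b) are already available. For (a), $\BD$ is strongly complete with respect to $\mathbf{4}$ and is locally finite, both recalled in Subsection~\ref{ssec:FDE}; local finiteness is precisely what lets the conclusion be phrased as \emph{finite} strong completeness rather than requiring full strong completeness of the upper logic. For (b), Lemma~\ref{lem:lukproductcompleteness} gives finite strong completeness of $\Luk(\neg)$ with respect to $[0,1]_{\Luk}\times[0,1]^{op}_{\Luk}$ (which in turn rested on Lemma~\ref{lem:lukproduct-luk} and finite standard completeness of $\Luk$).

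Item (c) is where the actual content lies. For soundness, one checks that the two modal axioms and the monotonicity rule, read in $[0,1]_{\Luk}\times[0,1]^{op}_{\Luk}$, are exactly the import-export identity (A3), the mutual definability $p^- = p^+\circ\neg$ of the negative probability from the positive one, and the monotonicity condition (A2) of Subsection~\ref{ssec:NSProb} (with A1 automatic from the codomain $[0,1]\times[0,1]$): computing componentwise with $\oplus,\ominus,\neg$ as in Example~\ref{ex:standardLuk}(3.), one uses (A2) to see that the \L ukasiewicz truncations hidden in $\ominus$ and $\oplus$ never activate, e.g.\ $p^+(\f\wedge\p)\le p^+\f$ makes $p^+\f \ominus_{\Luk} p^+(\f\wedge\p)$ ordinary subtraction, and its sum with $p^+\p$ equals $p^+(\f\vee\p)\le 1$. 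Since the weighted average of non-standard probabilities is again a non-standard probability (``$\WA$ preserves probabilities'', Subsection on aggregating probabilities), and each source $(p_i^+,p_i^-)$ satisfies A1--A3 by \cite[Lemma~1]{kleinetal2020}, every intended weighted-average frame validates $M$. For the converse, substantial half of (c) one must show that any $U$-valued assignment to the atoms $B\f$ consistent with $M$ --- equivalently, any non-standard probability assignment on $\LBD$ --- is induced by an actual $\mathbf{4}$-based, $[0,1]_{\Luk}\times[0,1]^{op}_{\Luk}$-measured frame from our class; this is exactly \cite[Theorem~4]{kleinetal2020}, which realises any such assignment from a classical probability measure on a $\BD$ double-valuation model, i.e.\ from a single mass function, and taking $n=1$, $w_1=1$ presents it as a weighted-average frame.

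I expect the main obstacle to be bookkeeping rather than anything conceptual: aligning the precise hypotheses and the abstract notion of ``measured frame'' of \cite{CintulaN14} with the concrete frames $F=(W,\mathbf{4},[0,1]_{\Luk}\times[0,1]^{op}_{\Luk},\mu^B)$ used here --- in particular making the sorted consequence relation, the finiteness restriction on upper-level premises, and the interplay between local finiteness of $\BD$ and the modal rule (which collapses a possibly infinite set of $\BD$-consequences into the finitely many non-equivalent modal atoms $B\psi$ over a fixed finite vocabulary) fit the abstract template. Once (c) is reformulated as ``$M$ axiomatises exactly the non-standard probabilities'' and discharged via \cite[Lemma~1 and Theorem~4]{kleinetal2020}, the corollary follows by quoting \cite[Theorems 1 and 2]{CintulaN14}.
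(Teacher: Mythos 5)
Your proposal matches the paper's own argument: the corollary is obtained by instantiating \cite[Theorems 1 and 2]{CintulaN14} with strong completeness and local finiteness of \BD\ w.r.t.\ $\mathbf{4}$, finite strong completeness of $\Luk(\neg)$ w.r.t.\ $[0,1]_{\Luk}\times[0,1]^{op}_{\Luk}$ (Lemma~\ref{lem:lukproductcompleteness}), and soundness of the modal axioms and rule, after recasting the concrete frames $F=(W,\mathbf{4},[0,1]_{\Luk}\times[0,1]^{op}_{\Luk},\mu^B)$ in the abstract measured-frame format. The only remark is that the converse half of your item (c) is not needed for the corollary as stated, since the frame class is defined as those \emph{validating} $M$ (so no realisation of $\mu^B$ from actual mass functions is required); the paper treats that realisation, via \cite[Theorem~4]{kleinetal2020}, as a separate subsequent observation yielding completeness w.r.t.\ the intended single-source frames, exactly as in the last part of your sketch.
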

In such frames, $\mu^B$ interprets $B$ as a probability:
For a frame to validate the axioms in $M$ means they are sent to $(1,0)$, by an evaluation in $[0,1]_{\Luk} \times [0,1]^{op}_{\Luk}$ induced by $\mu^B$ over the lower state valuations (which determines values of modal atomic formulas). An equivalence $\alpha\leftrightarrow\beta$ is evaluated at $(1,0)$ iff the values of $\alpha$ and $\beta$ are equal. $B(\f)^M = \mu^B(\f^M) = (p^+(\f),p^-(\f))$. Therefore, the first two axioms say that
\begin{align*}
p^+(\f\vee\psi) = (p^+(\f) - p^+(\f\wedge\psi))+ p^+(\psi)\ &\mbox{and}\ p^+(\neg\f)= \neg p^-(\f)\\
p^-(\f\vee\psi) = (p^-(\f) - p^-(\f\wedge\psi))+ p^-(\psi)\ &\mbox{and}\ p^-(\neg\f)= \neg p^+(\f).
\end{align*}
Similarly, the fact that the frame validates the rule say that $p^+$ ($p^-$) are monotone (antitone) w.r.t. $\f\vdash_{\BD}\psi$.
Analogous observation holds for the case the upper logic is the bilattice one.

From \cite[Theorem 4]{kleinetal2020}, we know that it is the
induced probability function of exactly one mass function on the $\BD$ canonical model, which in fact yields completeness w.r.t.\ the intended frames described above (with a single source). 

\par\medskip\noindent
\textit{II. A bilattice \L ukasiewicz logic}. Alternatively, if we wish to use full expressivity of a bilattice language, we can take in the upper layer $\mathcal{L}_u =\{\wedge,\vee,\sqcap,\sqcup,\subset,\neg,0\}$ to be the language of the \emph{product residuated bilattice} $[0,1]_{\Luk} \odot [0,1]_{\Luk} = ([0,1]\times [0,1], \wedge, \vee, \sqcap, \sqcup, \supset, \neg, (0,0))$, defined in the spirit of \cite{JR2012} in Example \ref{ex:Luksqarebilattice}. We evaluate  formulas of the upper logic in the matrix $([0,1]_{\Luk} \odot [0,1]_{\Luk},F)$ with $F = \{(1,a) \mid a\in [0,1]\}$ as the designated values, so that we send $0$ to $(0,0)$. 
The constants and connectives $\top,\bot,1,\ast,\rightarrow,{\sim},\oplus,\ominus$ are definable as follows:
\begin{align*}
{\sim}\alpha &:= (\alpha\supset 0)\sqcup\neg(\neg\alpha\supset 0) & \top &:= 0\supset 0 \ \ \bot:= \neg\top \ \ 1:= {\sim}0 \\
\alpha\rightarrow\beta &:= (\alpha\supset\beta)\wedge(\neg\beta\supset\neg\alpha) & \alpha\oplus\beta &:= ({\sim}\alpha\supset\beta)\sqcup\neg({\sim}\neg\alpha\supset\neg\beta)\\
\alpha\ast\beta &:= \neg(\beta\rightarrow\neg\alpha) & \alpha\ominus\beta &:= {\sim}(\alpha\supset\beta)\sqcup\neg{\sim}(\neg\alpha\supset\neg\beta)
\end{align*}
For an evaluation $e$, it holds that $e(\alpha\rightarrow\beta)\in F$ iff $e(\alpha\rightarrow\beta)\geq_t (1,1)$ iff $e(\alpha)\leqt e(\beta)$. The upper logic $L_u$ as a consequence relation is defined to be $$\Gamma\vDash_{\Luk\odot\Luk}\alpha\ \mbox{iff }\ \forall e (e[\Gamma]\subseteq F \to e(\alpha)\in F).$$
The intended frames now use $[0,1]_{\Luk}\odot [0,1]_{\Luk}$ as the upper algebra, otherwise semantics of atomic modal formulas is computed (from multiple sources) as in the previous logic. We also obtain literally the same modal axioms $M$ as above. Only here, apart from a very generic completeness w.r.t.\ $\mathbf{4}$ based frames, where the upper algebra is an algebra (in fact the Lindenbaum-Tarski algebra) of the upper logic, we cannot provide a better insight at the moment and leave axiomatization of $L_u$, and completeness w.r.t $[0,1]_{\Luk}\odot [0,1]_{\Luk}$-measured frames to further investigations (cf.\ footnote \ref{ft:axiom}).

\subsection{Logic of monotone coherent belief}\label{ssec:monotonebelief}
The simplest logic we propose to deal with scenarios like the one of Example \ref{case:human} is of the form $(\BD, M, \BD)$. Both lower and upper languages are the language of \BD, $\mathcal{M}$ consists of a single belief modality $B$. The intended frames are based on double-valuation semantics of \BD\ as before, only now we evaluate formulas of the upper logic in the bilattice $\mathbf{L}_{[0,1]}\odot \mathbf{L}_{[0,1]}$ on Fi\-gu\-re \ref{fig:square} (right). A source is given by a mass function on the states $m_i: W \to [0,1]$, we again assume there are $n$ sources. For a non-modal formula $\f$, we obtain the value $||B\f||\in\mathbf{L}_{[0,1]}\odot \mathbf{L}_{[0,1]}$ as follows.
First, for each source $m_i$, we have $(\sum_{v\, \Vdash^+\f} m_i(v), \sum_{v\, \Vdash^-\f} m_i(v)) = (p_i^+(\f),p_i^-(\f)).$ Now, applying the $\Min$\ aggregation strategy we obtain
$$||B\f|| = \left(\min_{1\leq i \leq n} p_i^+(\f), 
\min_{1\leq i \leq n} p_i^-(\f)\right).$$
Similarly, we may use the $\Max$\ aggregation strategy when reasoning with trusted sources. As before, we can see the frames inside the framework of \cite{CintulaN14} to derive completeness: frames are of the form $F = (W, \mathbf{4}, \mathbf{L}_{[0,1]}\odot \mathbf{L}_{[0,1]}, \mu^B)$ where  $\mu^B:  \prod_{v\in W} \mathbf{4}\to \mathbf{L}_{[0,1]}\odot \mathbf{L}_{[0,1]}$ computes the $\Min$\ ($\Max$) aggregation of the probabilities given by the individual sources. 
In general this aggregation strategy does not yield a probability, but it is monotone and $\neg$-compatible. 
This motivates considering logic $(\BD, M, \BD)$, where the modal part $M$ consists of the following two axioms and a rule
\begin{center}
\begin{tabular}{c c}
  $B\neg\f\dashv\vdash_{\BD_u}\neg B\f $  & \ \ \ \ $\f\vdash_{\BD_e}\psi / B\f \vdash_{\BD_u} B\psi$.
\end{tabular}
\end{center}
As $\BD$ is strongly complete w.r.t.\ both $\mathbf{4}$ and $\mathbf{L}_{[0,1]}\odot \mathbf{L}_{[0,1]}$\footnote{Because it has $(2\odot 2,\{(1,0),(1,1)\})$ as a sub-matrix: the obvious embedding is a \emph{strict} homomorphism of de Morgan matrices - it preserves and reflects the filters.}, 
we can apply \cite[Theorem 1]{CintulaN14} to conclude that $(\BD, M, \BD)$ is strongly complete w.r.t.\ $\mathbf{4}$-based $\mathbf{L}_{[0,1]}\odot \mathbf{L}_{[0,1]}$-measured frames validating $M$. In such frames, $\mu^B$ interprets $B$ as a monotone and $\neg$-compatible assignment (not necessarily a probability). We cannot in general see it as coming from a measure, or a set of measures\footnote{It is not hard to provide an example of such assignment which cannot be obtained by $\Min\ (\Max)$ aggregation of probabilities.}, on the lower states (to recover sources), and connect it with the intended semantics. 

One could however replace the upper language with the full bilattice language, consider modalities indexed by sources, and express the $\Min\ (\Max)$ aggregations explicitly using $\sqcap, \sqcup$ connectives.
\section{Conclusion and further directions} 
We have proposed two-layer logics of belief based on potentially inconsistent probabilistic information coming from multiple sources. The framework keeps positive and negative aspect of information (support, evidence, belief) separate, though inter-linked, in both layers of the semantics, and thus allows for reasoning with inconsistencies, in contrast to getting rid of them. Doing so, we believe we have laid groundwork to a modular framework to model reasoning with inconsistent probabilistic information. 

We see our contribution in the following: 
to see how Belnap-Dunn's logic \BD\ (on the lower layer, and behind the non-standard probabilities) can be combined with many-valued reasoning on the upper layer provides a novel example of two-layer logics for reasoning under uncertainty. The only examples considered so far used either classical logic \cite{faginHalpernMegido1990}, or quantitative reasoning in form of linear inequalities on the upper layer \cite{Zhou13}. 
The logic $\Luk(\neg)$, extending \L ukasiewicz logic with bi-lattice negation, we introduced in Subsection \ref{ssec:probabilisticbelief} and proved its finite strong standard completeness, is to our best knowledge new and might be of independent interest. (The same can be said about the bi-lattice \L ukasiewicz logic, which however remains to be axiomatized and its completeness studied.)

The project is subject to ongoing work.
Apart from investigating further the logics proposed in this paper, we are pursuing the following research directions: 

In the continuation of \cite{FMPTW20} that generalises Dempster-Shafer theory \cite{shafer1976} to finite lattices, we are currently working on  adapting the theory to the \BD-based setting, and putting it in context of existing literature on belief functions. This would allow us to consider Dempster-Shafer combination rule as another aggregation strategy.

To cover cases when a source does not give an opinion about each formula of the language, we need to account for sources providing partial probability maps. Also cases where sources provide heterogeneous information need to be included.

An important direction to move further is to capture dynamics of information and belief given by updates on the level of sources, and to generalize the framework to the multi agent setting involving group modalities and dynamics of belief. Specifically, forming group belief, like common and distributed belief, will involve communication and sharing or pooling of sources. It might call for a use of various upper-layer languages, among those we see the ones with additional (nestable) modalities inside the upper logic to account for reflected, higher-order beliefs, in contrast to the beliefs grounded directly in the sources.
\bibliographystyle{splncs04}
\bibliography{DALI_20.bib}
\end{document}